\def\temp{dvips.def}
\def\Ginclude@graphics#1{\def\temp{#1}---image \expandafter\strip@prefix\meaning\temp---}
\newcommand{\ignore}[1]{}
\newtheorem{theorem}{Theorem}
\newtheorem{lemma}{Lemma}
\newtheorem{definition}[theorem]{Definition}
\newcommand{\Min}{{\rm Min}}
\newcommand{\Next}{{\rm Next}}
\newcommand{\FF}{\mathbb{F}}
\begin{document}

\title{Enumerating
all the Irreducible Polynomials\\ over Finite Field}
\author{{\bf Nader H. Bshouty}\\ Dept. of Computer Science\\ Technion\\  Haifa, 32000
\and {\bf Nuha Diab} \\ Sisters of Nazareth High School \\ Grade 12\\ P.O.B. 9422, Haifa, 35661
\and {\bf Shada R. Kawar}  \\ Nazareth Baptist High School\\ Grade 11\\ P.O.B. 20, Nazareth, 16000
\and {\bf Robert J. Shahla} \\ Sisters of Nazareth High School \\
Grade 11\\ P.O.B. 9422, Haifa, 35661}


%
\maketitle

\begin{abstract}
In this paper we give a detailed analysis of deterministic and randomized algorithms that enumerate any number of irreducible polynomials of degree $n$
over a finite field and their roots in the extension field
in quasilinear\footnote{$O(N\cdot poly(\log N))$ where $N=n^2$ is the size of the output.} time cost per element.

Our algorithm is based on an improved algorithm for enumerating all the Lyndon words of length $n$ in linear delay time and the known reduction of Lyndon words to irreducible polynomials.
\end{abstract}

\section{Introduction}
The problem of enumerating the strings in a language $L$ is to
list all the elements in $L$ in some order.
Several papers study this problem. For example,
Enumerating all spanning trees, \cite{KR00}, minimal transversals for some Geometric Hypergraphs,
\cite{EMR09}, maximal cliques, \cite{MU04}, ordered trees, \cite{E85},
certain cuts in graphs, \cite{VY92,YWS10}, paths in a graph, \cite{S95},
bipartite perfect matchings, \cite{U01}, maximum and maximal matchings in bipartite graphs, \cite{U97},
and directed spanning trees in a directed graph~\cite{U96}.
See the list in \cite{FM} for other enumeration problems.

One of the challenges in enumeration problems
is to find an order of the elements of $L$
such that finding the next element in that order
can be done in quasilinear time in the length
of the representation of the element. The time that the algorithm
takes before giving the first element
is called the {\it preprocessing time}. The time of
finding the next element is called the {\it delay time}.
In~\cite{AS09}, Ackerman and Shallit gave a linear preprocessing
and delay time for enumerating
the words of any regular language (expressed as a regular expression or NFA)
in lexicographic order.

Enumeration is also of interest to mathematicians without
addressing the time complexity.  Calkin and Wilf,\cite{CW00}, gave an
enumeration of all the rational numbers such that
the denominator of each fraction is the numerator of the next one.

Another problem that has received considerable attention is the problem
of ranking the elements of $L$. In ranking the goal is to find
some total order on the elements of $L$ where the problem of
returning the $n$th element in that order can be solved in polynomial time. Obviously, polynomial
time ranking implies polynomial time enumeration.
In the literature, the problem of ranking is already solved for permutations~\cite{MR01,T08}
and trees of special properties \cite{GLW82,L86,P86,RW11,WC11,WCW11,WCC11,AKN11,WCCL13,WCCK13}.
Those also give enumerating algorithms for such objects.

Let $\FF_q$ be a finite field with $q$ elements.
Let $P_{n,q}$ be the set of irreducible polynomials over $\FF_q$
of degree $n$ and their roots in $\FF_{q^n}$. Several algorithms
in the literature use irreducible polynomials of degree $n$ over finite fields, especially algorithms in coding theory, cryptography and problems that use the Chinese Remainder Theorem for polynomials~\cite{C71,LSW,B15,DPS}.
Some other algorithms use only the roots of those polynomials. See for example~\cite{B15}.

In this paper, we study the following problems
\begin{enumerate}
\item Enumeration of any number of
irreducible polynomials of degree $n$ over a finite fields.
\item Enumeration of any number of
irreducible polynomials of degree $n$ and their roots over the extended field.
\item Enumeration of any number of roots of irreducible polynomials of degree $n$ over the extended field. One root for each polynomial.
\end{enumerate}

There are many papers in the literature that mention the result of enumerating all the irreducible polynomials of degree {\it less than or equal} to $n$ but do not give the exact algebraic complexity of this problem~\cite{CR00,D88,RS92,FK86,FM78,KRR15}. In this paper we give a detailed analysis of deterministic and randomized algorithms that enumerate any number of irreducible polynomials of degree $n$
over a finite field and/or their roots in the extension field
in quasilinear\footnote{$O(N\cdot poly(\log N))$ where $N=n^2$ is the size of the output.} time cost per element.

Our algorithm is based on an improved algorithm for enumerating all the Lyndon words of length $n$ in linear delay time and the well known reduction of Lyndon words to irreducible polynomials. In the next subsection we define the Lyndon word and present the result of the improved algorithm.

\subsection{The Enumeration of Lyndon Words}
Let $<$ be any total order on $\FF_q$.
A {\it Lyndon word} (or string) over $\FF_q$ of length $n$ is a word $w=w_1\cdots w_n\in \FF_q^n$
where every rotation $w_i\cdots w_nw_1\cdots w_{i-1}$, $i\not=1$ of $w$ is lexicographically larger than $w$. Let $L_{n,q}$ be the set of all the Lyndon words over $\FF_q$ of length $n$.
In many papers in the literature, it is shown that there is
polynomial time (in $n$) computable bijective function $\phi: L_{n,q}\to P_{n,q}$, where $P_{n,q}$ is the set of all polynomials of degree $n$ over $\FF_q$.
So the enumeration problem of the irreducible polynomials can be reduced to the problem of enumerating
the elements of $L_{n,q}$.

Bshouty gave in ~\cite{B15} a large subset $L'\subseteq L_{n,q}$
where any number of words in $L'$ can be enumerated in a linear delay time.
In fact, one can show that $L'$
has a small DFA and, therefore, this result follows from~\cite{CW00}.
It is easy to show that the set
$L_{n,q}$ cannot be accepted by a small size NFA, i.e.,
size polynomial in $n$, so one cannot generalize the above result to all $L_{n,q}$.
Duval~\cite{D88} and Fredricksen et. al., \cite{FK86,FM78} gave
enumeration algorithms of all the words in $\cup_{m\le n}L_{m,q}$ that run in linear delay time.
Berstel and Pocchiola in~\cite{BP94} and Cattell et. al. in~\cite{CR00,RS92} show that, in Duval's algorithm,
in order to find the next Lyndon word in $\cup_{m\le n}L_{m,q}$, the amortized number of {\it updates} is constant.
The number of updtes is the number of symbols that the algorithm change in a Lyndon word in order to get the next word.
Such an algorithm is called CAT algorithm. See the references in \cite{CR00}
for other CAT algorithms.
Kociumaka et. al. gave an algorithm that finds the rank of a Lyndon
word in $O(n^2\log q)$ time and does unranking in $O(n^3\log^2 q)$ time.

In this paper, we give an enumeration algorithm of $L_{n,q}$ with linear delay time. Our algorithm is the same as Duval's algorithm with the addition of a simple data structure. We show that this data structure
enable us to find the next Lyndon word of {\it length $n$} in constant updates per symbol and therefore in linear time. We also show that our algorithm is CAT algorithm and give an upper bound for the amortized update cost.

Another problem is testing whether a word of length $n$ is Lyndon word. In~\cite{D83}, Duval gave a linear time algorithm for such test. In this paper we give a simple algorithm that uses the suffix trie data structure and runs in linear time.

This paper is organized as follows. In Section 2 we give the exact arithmetic complexity of the preprocessing and delay time for enumerating any number of irreducible polynomials and/or their roots. In Section 3 we give a simple data structure that enable us to change Duval's algorithm to an algorithm that enumerates all the Lyndon words of length $n$ in linear delay time. We then show in Section 4 that the algorithm is CAT algorithm. In Section 5 we give a simple linear time algorithm that tests whether a word is a Lyndon word.

\section{Enumerating Irreducible Polynomials}
In this section we give the analysis for the algebraic complexity of the preprocessing time and delay time of enumerating irreducible polynomials of degree $n$ over a finite field and/or their roots in the extended field.

Let $q$ be a power of a prime $p$ and $\FF_{q}$ be the finite field with $q$ elements.
Our goal is to enumerate all the irreducible polynomials of degree $n$
over $\FF_q$ and/or their roots in the extension field $\FF_{q^n}$.

The best deterministic algorithm for constructing an irreducible polynomial over $\FF_q$
of degree $n$ has time complexity $T_D:=O(p^{1/2+\epsilon}n^{3+\epsilon}+(\log q)^{2+\epsilon}n^{4+\epsilon})$
for any $\epsilon>0$. The best randomized algorithm has time complexity $T_R:=O((\log n)^{2+\epsilon} n^{2}+(\log q) (\log n)^{1+\epsilon} n)$
for any $\epsilon>0$. For a comprehensive survey of this problem see~\cite{S99}
Chapter 3. Obviously, the preprocessing time for enumerating irreducible polynomials cannot be less than the time for constructing one.
Therefore, $T_D$ for the deterministic algorithm, and $T_R$ for the randomized algorithm.

The main idea of the enumeration algorithm is to enumerate the roots of the irreducible polynomials
in the extension field and then construct the polynomials from their roots.
Let $\FF_{q^n}$ be the extension field of $\FF_q$ of size $q^n$.
One possible representation of the elements of the field $\FF_{q^n}$ is by polynomials of degree
at most $n-1$ in $\FF_q[\beta]/(f(\beta))$ where $f(x)$ is an irreducible polynomial of degree $n$.
A {\it normal basis} of $\FF_{q^n}$ is a basis over $\FF_q$ of the form
$N(\alpha):=\{\alpha,\alpha^q,\alpha^{q^2},\ldots,\alpha^{q^{n-1}}\}$ for some $\alpha\in \FF_{q^n}$ where $N(\alpha)$ is linearly independent.
The {\it normal basis theorem} states that for every finite field $\FF_{q^n}$
there is a normal basis $N(\alpha)$. That is, an $\alpha$ for which $N(\alpha)$ is
linearly independent over $\FF_q$.
It is known that such an $\alpha$ can be constructed
in deterministic time $O(n^3+(\log n)(\log\log n)(\log q) n)$ and randomized time
$O((\log \log n)^2(\log n)^4 n^2 + (\log n)(\log \log n)(\log q) n )$~\cite{GS,Nor01,Nor02}.
The enumeration algorithm will use the normal basis for representing the elements of $\FF_{q^n}$.
Notice that the time complexity to find such an element $\alpha$ is less than constructing one irreducible polynomial. If we use the normal basis
$N(\alpha)$ for the representation of the elements of $\FF_{q^n}$, then
every element $\gamma\in \FF_{q^n}$ has a unique representation $\gamma=\lambda_1\alpha+\lambda_2\alpha^{q}+\lambda_3\alpha^{q^2}+\cdots+\lambda_n\alpha^{q^{n-1}}$
where $\lambda_i\in \FF_q$ for all $i$.

It is known that any irreducible polynomial $g$ of degree $n$ over $\FF_q$
has $n$ distinct roots in $\FF_{q^n}$.
If one can find one root $\gamma\in \FF_{q^n}$ of $g$ then the other roots are $\gamma^q,\gamma^{q^2},\ldots,\gamma^{q^{n-1}}$ and therefore
$g_\gamma(x):=(x-\gamma)(x-\gamma^q)\cdots (x-\gamma^{q^{n-1}})=g(x)$.
The coefficients of $g_\gamma(x)$ can be computed in quadratic time $O(n^2\log^3 n(\log\log n)^2)$.
See Theorem~A and~B in~\cite{S99} and references within.
The element $\gamma=\lambda_1\alpha+\lambda_2\alpha^{q}+\lambda_3\alpha^{q^2}+\cdots+\lambda_n\alpha^{q^{n-1}}$
is a root of an irreducible polynomial of degree $n$ if and only if $\gamma,\gamma^{q},\gamma^{q^2},\ldots,\gamma^{q^{n-1}}$ are distinct. Now since
\begin{eqnarray}
\gamma^{q^{n-k}}=\lambda_k\alpha+\lambda_{k+1}\alpha^q\cdots+
\lambda_n\alpha^{q^{n-k}}+\lambda_1\alpha^{q^{n-k+1}}+\cdots+\lambda_{k-1}\alpha^{q^{n-1}},
\label{pk}
\end{eqnarray}
$\gamma$ is a root of an irreducible polynomial of degree $n$ if and only if
the following $n$ elements
$$(\lambda_1,\lambda_2,\lambda_3,\cdots,\lambda_n), (\lambda_2,\lambda_3,\lambda_4,\cdots,\lambda_n,\lambda_1)
,(\lambda_3,\lambda_4,\lambda_5,\cdots,\lambda_n,\lambda_1,\lambda_2),\cdots,$$
\begin{eqnarray}(\lambda_n,\lambda_1,\lambda_2,\cdots,\lambda_{n-1})\label{ddd}\end{eqnarray}
are distinct.

When (\ref{ddd}) happens then we call $\lambda=(\lambda_1,\lambda_2,\lambda_3,\cdots,\lambda_n)$
{\it aperiodic word}.
We will write $\lambda$ as a word $\lambda=\lambda_1\lambda_2\lambda_3\cdots\lambda_n$ and define $\gamma(\lambda):=\lambda_1\alpha+\lambda_2\alpha^{q}+\lambda_3\alpha^{q^2}+\cdots+\lambda_n\alpha^{q^{n-1}}$.
Therefore
\begin{lemma} We have
\begin{enumerate}
\item For any word $\lambda=\lambda_1\cdots\lambda_n\in \FF_q^n$ the element $\gamma(\lambda)$ is a root of an irreducible polynomial of degree $n$ if and only if $\lambda$ is an aperiodic word.
\item Given an aperiodic word $\lambda$, the irreducible polynomial $g_{\gamma(\lambda)}$ can be constructed in time\footnote{Here $\tilde O(N)=\tilde O(N\cdot poly(\log (N)))$} $O((\log\log n)^2(\log n)^3 n^2)=\tilde O(n^2)$.
    \end{enumerate}
\end{lemma}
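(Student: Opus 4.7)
For part (1), the plan is to translate the statement about $\gamma(\lambda)$ into a statement about Galois conjugates. A fixed element $\gamma\in\FF_{q^n}$ is a root of an irreducible polynomial of degree exactly $n$ over $\FF_q$ if and only if its minimal polynomial has degree $n$, equivalently, if and only if $\gamma\notin\FF_{q^d}$ for any proper divisor $d$ of $n$. Since $\gamma^{q^i}=\gamma$ is equivalent to $\gamma\in\FF_{q^i}$, this is the same as saying that the Frobenius orbit $\gamma,\gamma^q,\gamma^{q^2},\ldots,\gamma^{q^{n-1}}$ consists of $n$ pairwise distinct elements. Now I would invoke the normal basis: since $N(\alpha)=\{\alpha,\alpha^q,\ldots,\alpha^{q^{n-1}}\}$ is linearly independent over $\FF_q$, every element of $\FF_{q^n}$ has a unique expansion in this basis, and by formula~(\ref{pk}) the coefficient vector of $\gamma^{q^{n-k}}$ is exactly the $(k-1)$-th cyclic rotation of $\lambda$. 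Hence the conjugates are pairwise distinct if and only if all $n$ cyclic rotations listed in~(\ref{ddd}) are pairwise distinct, i.e., $\lambda$ is aperiodic.

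For part (2), the plan is to build $g_{\gamma(\lambda)}(x)=\prod_{i=0}^{n-1}(x-\gamma^{q^i})$ explicitly inside $\FF_{q^n}$ using the normal basis representation, then observe the resulting polynomial lies in $\FF_q[x]$. The roots themselves are obtained for free: by part~(1), the coefficient vectors of $\gamma,\gamma^q,\ldots,\gamma^{q^{n-1}}$ in the basis $N(\alpha)$ are just the $n$ cyclic rotations of $\lambda$, so listing them costs $O(n^2)$ bit operations with no field arithmetic at all. With the $n$ roots in hand, I would apply the standard subproduct-tree construction (Theorem~A and Theorem~B in~\cite{S99}) which multiplies out a product of $n$ linear factors in $\tilde O(n)$ multiplications in $\FF_{q^n}$; using fast multiplication in $\FF_{q^n}$, each such multiplication costs $\tilde O(n)$ operations in $\FF_q$, giving the claimed $O((\log\log n)^2(\log n)^3 n^2)=\tilde O(n^2)$ bound.

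The main obstacle is purely bookkeeping in part~(2): making sure the cost of arithmetic in $\FF_{q^n}$ under the normal basis representation matches the cited $\tilde O(n)$ bound for a single multiplication, and that the subproduct-tree step does not blow up by more than polylog factors. All the algebraic content of the lemma is really in part~(1), and that part reduces in a clean way to the uniqueness of the normal basis expansion together with the standard characterisation of minimal polynomials via Frobenius orbits.
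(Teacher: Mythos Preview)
Your proposal is correct and follows essentially the same route as the paper. In fact, the paper does not give a separate proof of this lemma at all: the ``Therefore'' before the statement indicates that it is a summary of the discussion immediately preceding it, where the paper already argues (via formula~(\ref{pk}) and the normal basis) that $\gamma(\lambda)$ has $n$ distinct Frobenius conjugates iff the rotations in~(\ref{ddd}) are distinct, and cites Theorems~A and~B of~\cite{S99} for the $\tilde O(n^2)$ cost of expanding $g_\gamma(x)$. Your write-up simply makes these steps more explicit.
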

Obviously, the aperiodic word $\lambda=\lambda_1\lambda_2\lambda_3 \cdots\lambda_n$
and $R_k(\lambda):=\lambda_k\lambda_{k+1}$ $\cdots\lambda_n\lambda_1$ $\cdots\lambda_{k-1}$
corresponds to the same irreducible polynomial. See~(\ref{pk}). That is, $g_{\gamma(\lambda)}=g_{\gamma(R_i(\lambda))}$
for any $1\le i\le n$.
Therefore to avoid enumerating the same polynomial
more than once, the algorithm enumerates only the minimum element (in lexicographic order)
among $\lambda,R_2(\lambda),\ldots,R_{n}(\lambda)$. Such an element is called
{\it Lyndon word}.
Therefore
\begin{definition} The word $\lambda=\lambda_1\lambda_2\lambda_3\cdots\lambda_n$ is called a Lyndon word if $\lambda<R_i(\lambda)$ for all $i=2,\ldots,n$.
\end{definition}
To enumerate all the irreducible polynomials the algorithm enumerates all the Lyndon words of length $n$ and, for each one, it computes the corresponding irreducible polynomial.

\begin{figure}
\centering
\includegraphics[trim = 0 0cm 0 0cm,width=1.1\textwidth]{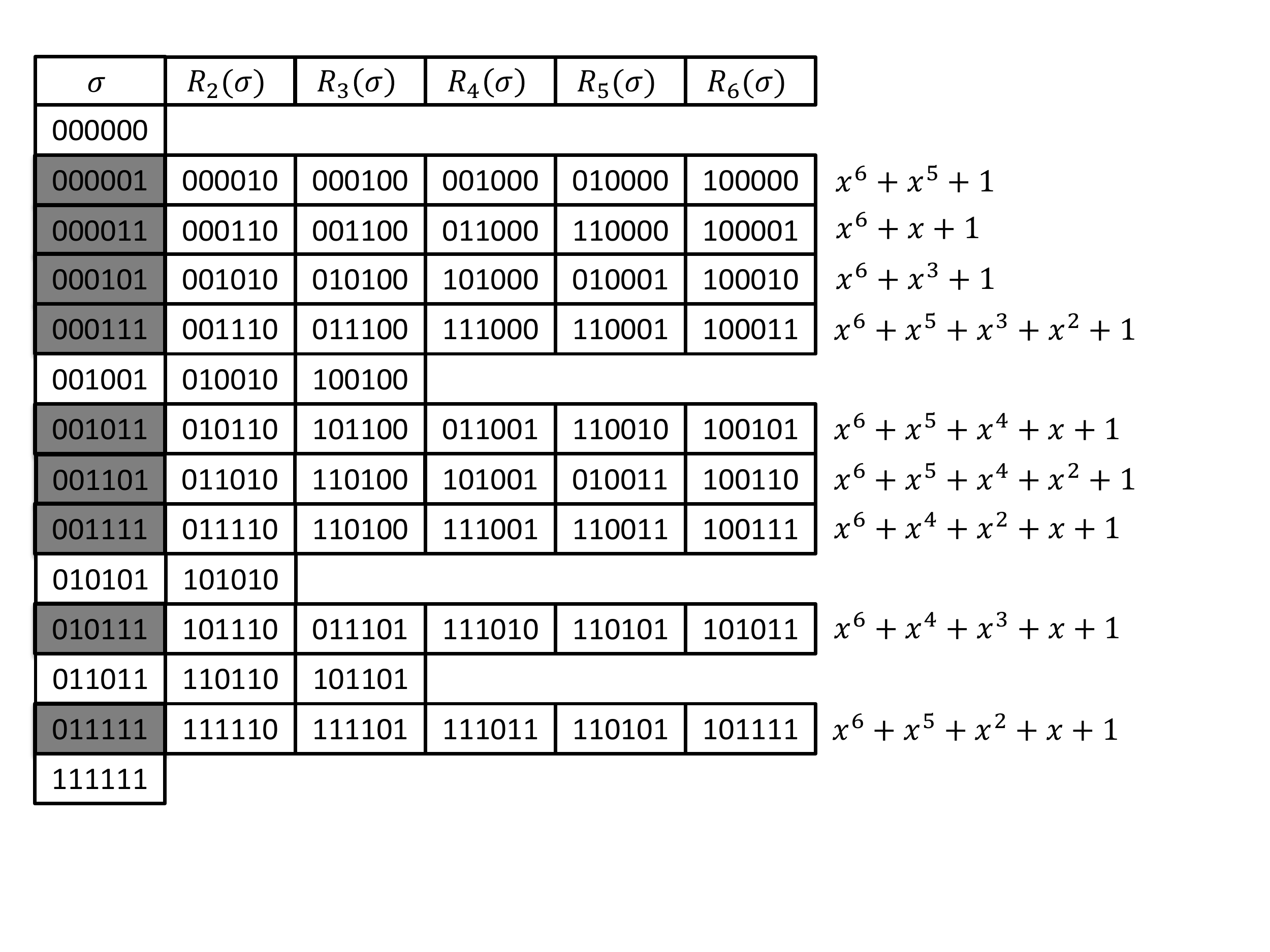}
\caption{A table of the words over $\Sigma=\{0,1\}$ and all their rotations.
The Lyndon words are in the gray boxes.
The Lyndon words of length $6$ are $000001$, $000011$, $000101$, $000111$, $001011$, $001101$, $001111$, $010111$ and $011111$. The polynomial $f(x)=x^6+x+1$ is irreducible over $\FF_2$ and therefore $\FF_{2^6}=\FF_2[\beta]/(\beta^6+\beta+1)$ and every element in $\FF_{2^6}$ can be represented as $\lambda_5\beta^5+\cdots+\lambda_1\beta+\lambda_0$.
For $\alpha=\beta^5+\beta^2+1$ the set $N(\alpha)=\{\alpha,\alpha^2,\alpha^4,\alpha^8,\alpha^{16},\alpha^{32}\}$ is a Normal basis.
The Lyndon word $001011$ corresponds to the element $\gamma=\alpha^4+\alpha^{16}+\alpha^{32}$. The element $\gamma$ corresponds to the irreducible
polynomial $g_\gamma(x)=(x-\gamma)(x-\gamma^2)(x-\gamma^4)(x-\gamma^8)(x-\gamma^{16})(x-\gamma^{32})$ $=x^6+x^5+x^4+x+1$.}
\label{n6}
\end{figure}

In the next section, we show how to enumerate all the Lyndon words of length $n$ in linear delay time
$O(n)$.
Then from $\gamma(\lambda)$ (that corresponds to an irreducible polynomial)
the algorithm constructs the irreducible polynomial $g_{\gamma(\lambda)}(x)$
and all the other $n-1$ roots in quadratic time~$\tilde O(n^2)$. Since the
size of all the roots is $O(n^2)$, this complexity is quasilinear in the output size.
For the problem of enumerating only the roots (one root for each irreducible polynomial) the delay time is $O(n)$.

Let $L_{n,q}$ be the set of all Lyndon words over $\FF_q$ of length~$n$.
We have shown how to reduce our problem to the problem of enumerating all the Lyndon words over $\FF_q$ of length $n$ with linear delay time. Algorithm ``Enumerate'' in Figure~\ref{Alg} shows the reduction.

\begin{figure}[h!]
  \begin{center}
  \fbox{\fbox{\begin{minipage}{28em}
  \begin{tabbing}
  xxxx\=xxxx\=xxxx\=xxxx\= \kill
  {\bf Enumerate$(n,q)$}\\
  Preprocessing\\
  \> 1p) Find an irreducible polynomial $f(x)$ of degree $n$ over $\FF_q$.\\
  \> 2p) Find a normal basis $\alpha,\alpha^q,\ldots,\alpha^{q^{n-1}}$ in $\FF_q[\beta]/(f(\beta))$.\\
  \> 3p) Let $\lambda=00\cdots01$ \ \  /* The first Lyndon word */  \\
  Delay\\
  \> 1d) Define $\gamma=\lambda_1\alpha+\lambda_2\alpha^q+\cdots+\lambda_n\alpha^{q^{n-1}}$.\\
  \> 2d) Compute $g_\gamma(x):=(x-\gamma)(x-\gamma^q)\cdots (x-\gamma^{q^{n-1}})$.\\
  \> 3d) Output($g_\gamma(x)$, $\gamma,\gamma^q,\cdots,\gamma^{q^{n-1}}$).\\
  \> 4d) Find the next Lyndon word: $\lambda\gets \Next(\lambda)$.\\
  \> 5d) If $\lambda=00\cdots01$ then Halt else Goto 1d.
  \end{tabbing}
  \end{minipage}}}
  \end{center}
	\caption{An enumeration algorithm.}
	\label{Alg}
	\end{figure}

Putting all the above algebraic complexities together, we get the following
\begin{theorem} Let $\epsilon>0$ be any constant. There is a randomized enumeration algorithm for
\begin{enumerate}
\item the irreducible polynomial over $\FF_q$ and their roots in $\FF_{q^n}$ in preprocessing time $O((\log n)^4(\log\log n)^2 n^2+(\log q)(\log n)^{1+\epsilon}n)$ and delay time $O((\log\log n)^2(\log n)^3n^2)$.
\item the roots in $\FF_{q^n}$ of irreducible polynomials of degree $n$ over $\FF_q$ in preprocessing time $O((\log n)^4(\log\log n)^2 n^2+(\log q)(\log n)^{1+\epsilon}n)$ and delay time $O(n)$.
\end{enumerate}
\end{theorem}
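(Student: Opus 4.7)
The plan is to verify both claimed bounds by walking through Algorithm \emph{Enumerate} in Figure~\ref{Alg} line by line, applying the algebraic complexities already collected in Section 2 together with the linear-delay Lyndon-word enumerator (developed in Sections 3--4) used as a black box. Once each line of the Preprocessing and Delay blocks has a cost attached to it, the theorem is obtained by summing and absorbing sub-dominant terms.

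For the preprocessing I would bound the three lines separately. Step 1p invokes the best randomized irreducible-polynomial construction, costing $T_R=O((\log n)^{2+\epsilon}n^2+(\log q)(\log n)^{1+\epsilon}n)$. Step 2p constructs a normal basis in $\FF_q[\beta]/(f(\beta))$ using the algorithm cited from \cite{GS,Nor01,Nor02}, whose randomized cost is $O((\log\log n)^2(\log n)^4 n^2+(\log n)(\log\log n)(\log q)n)$. Step 3p is the trivial initialization $\lambda=0\cdots 01$. Summing and observing that $(\log n)^{2+\epsilon}n^2$ is dominated by $(\log n)^4(\log\log n)^2 n^2$ (taking $\epsilon$ small, as is allowed) and that $(\log n)(\log\log n)(\log q)n$ is dominated by $(\log q)(\log n)^{1+\epsilon}n$, I arrive at the stated preprocessing bound $O((\log n)^4(\log\log n)^2 n^2+(\log q)(\log n)^{1+\epsilon}n)$.

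For the delay in part~1 I would charge each line of the Delay block. Step 4d advances to the next Lyndon word of length $n$ in $O(n)$ time by the Section~3 algorithm, and Step 1d interprets the updated $\lambda$ as $\gamma=\gamma(\lambda)$ in the normal basis at no additional cost beyond the $O(n)$ already paid. Step 2d assembles $g_\gamma(x)=\prod_{i=0}^{n-1}(x-\gamma^{q^i})$ in time $\tilde O(n^2)=O((\log\log n)^2(\log n)^3 n^2)$ by the cited Lemma, and Step 3d writes out the polynomial and its $n$ roots in $O(n^2)$ symbols. The dominant summand is Step 2d, producing the announced delay $O((\log\log n)^2(\log n)^3 n^2)$. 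For part~2 the output consists of only one root per polynomial, so Steps 2d--3d collapse to writing the single element $\gamma$ of size $O(n)$ in its normal-basis representation; together with the $O(n)$ cost of Step 4d this yields delay $O(n)$.

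The only non-mechanical point is justifying that enumerating Lyndon words of length exactly $n$ visits each irreducible polynomial of degree $n$ exactly once; this was already argued via the bijection between aperiodic words and roots in $\FF_{q^n}$ together with the identity $g_{\gamma(\lambda)}=g_{\gamma(R_i(\lambda))}$, so nothing further is required. The correctness and linear delay of the $\Next(\cdot)$ subroutine are inherited from Section~3, and the whole proof reduces to the cost accounting above.
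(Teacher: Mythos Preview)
Your proposal is correct and is precisely the paper's own argument: the paper does not give a separate proof but simply says ``Putting all the above algebraic complexities together,'' meaning exactly the line-by-line cost accounting of Figure~\ref{Alg} that you have spelled out. Your only addition is making the domination of sub-terms explicit, which the paper leaves to the reader.
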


\begin{theorem} Let $\epsilon>0$ be any constant. There is a deterministic enumeration algorithm for
\begin{enumerate}
\item the irreducible polynomial over $\FF_q$ and their roots in $\FF_{q^n}$ in preprocessing time $O(n^{3+\epsilon}p^{1/2+\epsilon}+(\log q)^{2+\epsilon}n^{4+\epsilon}$ and delay time $O((\log\log n)^2$ $(\log n)^3n^2)$.
\item the roots in $\FF_{q^n}$ of irreducible polynomials of degree $n$ over $\FF_q$ in preprocessing time $O(n^{3+\epsilon}p^{1/2+\epsilon}+(\log q)^{2+\epsilon}n^{4+\epsilon})$ and delay time $O(n)$.
\end{enumerate}
\end{theorem}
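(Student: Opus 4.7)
The plan is to run the same Enumerate algorithm from Figure~\ref{Alg} used for the randomized bound, but substitute the best known \emph{deterministic} subroutines in the preprocessing phase; the delay phase is already deterministic, so its analysis is unchanged.

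First I would break the preprocessing into its three steps (1p)--(3p). Step (1p) is the construction of a single irreducible polynomial of degree $n$ over $\FF_q$, and by the discussion preceding the theorem this can be performed deterministically in time $T_D=O(p^{1/2+\epsilon}n^{3+\epsilon}+(\log q)^{2+\epsilon}n^{4+\epsilon})$. Step (2p) is the construction of an $\alpha\in\FF_{q^n}$ for which $N(\alpha)=\{\alpha,\alpha^q,\ldots,\alpha^{q^{n-1}}\}$ is a normal basis; the cited deterministic bound $O(n^3+(\log n)(\log\log n)(\log q)\,n)$ is dominated by $T_D$. Step (3p) is trivial. Thus the preprocessing time is $O(n^{3+\epsilon}p^{1/2+\epsilon}+(\log q)^{2+\epsilon}n^{4+\epsilon})$, as required for both parts of the theorem.

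Next I would bound the delay. By the Lyndon-word enumeration algorithm promised in the next section, step (4d) produces the next Lyndon word of length $n$ in deterministic time $O(n)$. For part~2, steps (1d) and (3d) each cost $O(n)$ (writing down the coefficients of $\gamma$ in the normal basis, or listing its conjugates $\gamma^{q^i}$ via the cyclic shifts in equation~(\ref{pk})), and step (2d) is skipped. Thus for part~2 the delay is $O(n)$. For part~1, step (2d) constructs $g_{\gamma(\lambda)}(x)$ from its roots in time $\tilde O(n^2)=O((\log\log n)^2(\log n)^3 n^2)$ by the second item of the lemma, and this dominates the other steps, giving the claimed delay.

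Finally I would note correctness: by the lemma, $\gamma(\lambda)$ is a root of an irreducible polynomial of degree $n$ exactly when $\lambda$ is aperiodic, and restricting enumeration to Lyndon words of length $n$ yields each such polynomial exactly once by the cyclic-shift identification (\ref{pk}). Since there is no real obstacle here beyond bookkeeping, the main point to verify carefully is that the randomization in the earlier theorem was confined to steps (1p)--(2p); once these are replaced by their deterministic counterparts, every other subroutine (Lyndon-word successor, normal-basis expansion, and polynomial reconstruction from roots) is already deterministic with the stated costs, completing the proof.
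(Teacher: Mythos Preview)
Your proposal is correct and mirrors exactly what the paper does: the paper simply writes ``Putting all the above algebraic complexities together, we get the following'' and states the theorem, relying on the preceding discussion of $T_D$, the deterministic normal-basis construction, the Enumerate algorithm in Figure~\ref{Alg}, and the $\tilde O(n^2)$ cost of reconstructing $g_\gamma$. Your step-by-step accounting of (1p)--(3p) and (1d)--(4d), including the observation that the normal-basis cost is dominated by $T_D$ and that all delay-phase subroutines are already deterministic, is precisely this bookkeeping made explicit.
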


\section{Linear Delay Time for Enumerating $L_{n,q}$}
In this section we give Duval's algorithm,~\cite{D88}, that enumerates all the Lyndon words of length at most $n$, $\cup_{m\le n} L_{m,q}$, in linear delay time and change it to an algorithm that enumerates the Lyndon words of length $n$, $L_{n,q}$ in linear time. We will use a simple data structure that enable the algorithm to give the next Lyndon word of length $n$ in Duval's algorithm in a constant update per symbol and therefore in linear time.

Let $\Sigma=\{0,1,\ldots , q-1\}$ be the alphabet with the order $0<1<\cdots <q-1$. We here identify $\FF_q$ with $\Sigma$. We will sometime write the symbols in brackets. For example for $q=5$ the word $[q-1]^2[q-3]$ is $442$.  Let $w=\sigma_1\sigma_2\cdots\sigma_m$ be a Lyndon word for some $m\le n$. To find the next Lyndon word, (of length $\le n$) Duval's algorithm first define the word $v=D(w)=w^hw'$ of length $n$ where $w$ is a non-empty prefix of $w$ and $h\ge 0$ (and therefore $h|w|+|w'|=n$). That is, $v=D(w)=\sigma_1\cdots\sigma_m\sigma_1\cdots\sigma_m\cdots \sigma_1\cdots\sigma_m\sigma_1\cdots\sigma_{(n\mod m)}$. Then if $v$ is of the form $v=ub[q-1]^t$ where $t\ge 0$ and $b\not= [q-1]$ then the next Lyndon word in Duval's algorithm is $P(v)=u[b+1]$. We denote the next Lyndom word of $w$ (in Duval's algorithm) by $N(w):=P(D(w))$.
For example, for $q=3$, $n=7$ and $w=0222$, $v=D(w)=0222022$ and $N(w)=P(D(w))=02221$. Then $N(N(w))=022211$.

The following lemma is well known. We give the proof for completeness
\begin{lemma}\label{increase}
If $w$ is a Lyndon word and $|w|<n$ then $|N(w)|>|w|$.
\end{lemma}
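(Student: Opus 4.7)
The plan is to reduce the inequality $|N(w)|>|w|$ to a statement about the trailing run of $(q-1)$'s in $v=D(w)$. Writing $v=ub[q-1]^t$ with $b\neq q-1$, we have $|v|=n$ and $|N(w)|=|u[b+1]|=n-t$, so, setting $m=|w|$, the desired inequality $|N(w)|>m$ is equivalent to $t<n-m$. Since $m<n$ by hypothesis, position $m+1$ of $v$ exists, and it is enough to exhibit a symbol $\neq q-1$ among positions $m+1,\ldots,n$ of $v$.

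The structural ingredient I would invoke is a standard property of Lyndon words: the first symbol of $w$ is the minimum symbol occurring in $w$. Indeed, if some $\sigma_j<\sigma_1$, then the rotation $R_j(w)$ would begin with $\sigma_j$ and hence satisfy $R_j(w)<w$, contradicting the Lyndon condition $w<R_j(w)$. Combined with the explicit form $D(w)=\sigma_1\sigma_2\cdots\sigma_m\sigma_1\sigma_2\cdots\sigma_m\cdots\sigma_1\cdots\sigma_{n\bmod m}$, position $m+1$ of $v$ is exactly $\sigma_1$, the minimum symbol of $w$.

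Now I would finish by contradiction. Suppose $t\ge n-m$, so that each of the positions $m+1,m+2,\ldots,n$ of $v$ carries the symbol $q-1$. Reading position $m+1$ gives $\sigma_1=q-1$. Since $\sigma_1$ is the minimum symbol of $w$, every $\sigma_j$ equals $q-1$, i.e., $w=[q-1]^m$. If $m\ge 2$ then the rotation $R_2(w)$ equals $w$, violating the strict inequality in the Lyndon condition; hence $m=1$ and $w=[q-1]$. But then $v=[q-1]^n$ contains no symbol different from $q-1$, so the decomposition $v=ub[q-1]^t$ with $b\neq q-1$ does not exist and $N(w)$ is undefined---which is precluded by the (implicit) assumption that we are taking a successor. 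Therefore $t<n-m$, and $|N(w)|=n-t>m=|w|$.

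The only delicate point, and not really an obstacle, is recognising the implicit hypothesis that $N(w)$ exists: the word $[q-1]$ is the last Lyndon word in Duval's order, so the lemma is really a statement about all Lyndon words of length $<n$ for which a successor is defined. Once that is acknowledged, the entire proof reduces to the one-line observation that $v_{m+1}=\sigma_1$ is the minimum symbol of $w$, and hence cannot be $q-1$ unless $w$ is already the terminal Lyndon word.
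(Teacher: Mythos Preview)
Your proof is correct and follows essentially the same idea as the paper's: both arguments hinge on the observation that the symbol at position $|w|+1$ of $D(w)$ is $\sigma_1$, and that $\sigma_1<q-1$ for any Lyndon word with a successor. The only cosmetic differences are that the paper decomposes $w$ (rather than $D(w)$) as $ub[q-1]^t$ and derives $u_1\le b<q-1$ via the specific rotation $R_{|u|+1}$, whereas you invoke the standard fact that the first symbol of a Lyndon word is its minimum; you are also more explicit about the boundary case $w=[q-1]$.
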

\begin{proof}
Let $w=ub[q-1]^t$ where $b\not= [q-1]$. Then $u_1\le b$ because otherwise we would have $R_{|u|+1}(w)=b[q-1]^tu<ub[q-1]^t=w$
and then $w$ is not a Lyndon word.
Let $D(w)=w^hw'$ where $h\ge 0$ and $w'$ is a nonempty prefix of $w$. Since $|D(w)|=n>|w|$ we have $h\ge 1$. Since $w'_1=u_1\le b<q-1$, we have that
$|N(w)|=|P(D(w))|\ge h|w|+1>|w|.$
\end{proof}

\subsection{The Algorithm}
In this subsection we give the data structure and the algorithm that finds the next Lyndon word of length $n$ in linear time.

We note here that, in the literature, the data structure that is used for the Lyndon word is an array of symbols. All the analyses of the algorithms in the literature treat an access to an element in an $n$ element array and comparing it to another symbol as an operation of time complexity equal to~$1$. The complexity of incrementing/decrementing an index $0\le i\le n$ of an array of length $n$ and comparing two such indices are not included in the complexity. In this paper, the Lyndon words are represented with symbols and {\it numbers} in the range $[1,n]$. Every access to an element in this data structure and comparison between two elements are (as in literature) counted as an operation of time complexity equal $1$. Operations that are done on the indices of the array (as in literature) are not counted but their time complexity is linear in the number of updates.

Let $v\in \Sigma^n$. We define the {\it compressed representation} of $v$ as $v=v^{(0)}[q-1]^{i_1}v^{(1)}[q-1]^{i_2}\cdots v^{(t-1)}[q-1]^{i_t}$ where $i_1,\ldots,i_{t-1}$ are not zero ($i_t$ may equal to zero) and $v^{(0)},\ldots,v^{(t-1)}$ are nonempty words that do not contain the symbol $[q-1]$. If $v$ do not contain the symbol $[q-1]$ then $v=v^{(0)}[q-1]^0$ where $[q-1]^0$ is the empty word and $v^{(0)}=v$.
The data structure will be an array (or double link list) that contains $v^{(0)},i_1,v^{(1)},\cdots,v^{(t-1)},i_t$ if $i_t\not=0$ and $v^{(0)},i_1,v^{(1)},\cdots,v^{(t-1)}$ otherwise.

Define $\|v\|=\sum_{j=0}^{t-1} |v^{(j)}|+t$. This is the {\it compressed length} of the compressed representation of $v$.
Notice that for a word $v=v_1\cdots v_r$ that ends with a symbol $v_r\not=[q-1]$
we have $P(v)=v_1\cdots v_{r-1}[v_r+1]$ and for $u=v\cdot [q-1]^i$ we have $P(u)=P(v)$. Therefore $\|v\|-1\le \|P(v)\|\le \|v\|$.

Let $v=v^{(0)}[q-1]^{i_1}v^{(1)}[q-1]^{i_2}\cdots v^{(t-1)}[q-1]^{i_t}$  be any Lyndon word of length $n$.
The next Lyndon word in Duval's algorithm is
$$u^{(1)}:=N(v)=v^{(0)}[q-1]^{i_1}v^{(1)}[q-1]^{i_2}\cdots [q-1]^{i_{t-1}}\cdot P(v^{(t-1)})$$
To find the next Lyndon word $u^{(2)}$ after $u^{(1)}$ we take $\left(u^{(1)}\right)^hz^{(1)}$ of length $n$ where $z^{(1)}$ is a nonempty prefix of $u^{(1)}$ and then $u^{(2)}=\left(u^{(1)}\right)^h\cdot P(z^{(1)})$.
This is because $z^{(1)}_1=u^{(1)}\not=[q-1]$.
Since by Lemma~\ref{increase}, $|u^{(1)}|<|u^{(2)}|<\cdots$ we will eventually get a Lyndon word of length $n$. We now show that using the compressed representation we have
\begin{lemma} \label{kkk}The time complexity of computing $u^{(i+1)}$ from $u^{(i)}$ is at most $|u^{(i+1)}|-|u^{(i)}|+1$.
\end{lemma}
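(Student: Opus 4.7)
The plan is to exploit a single structural observation: since $|u^{(i)}| < n$, the proof of Lemma~\ref{increase} forces $h \geq 1$ in the decomposition $u^{(i+1)} = (u^{(i)})^h \cdot P(z^{(i)})$. Consequently $u^{(i+1)}$ begins with an entire copy of $u^{(i)}$, so the data structure already stores a valid prefix of $u^{(i+1)}$ and all the work reduces to appending $\Delta := |u^{(i+1)}| - |u^{(i)}|$ symbols on the right; in particular, nothing in the first $|u^{(i)}|$ positions is ever touched.

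Next I would describe the append procedure directly on the compressed representation. The $(h-1)$ additional copies of $u^{(i)}$ are obtained by scanning $u^{(i)}$'s compressed entries and pushing each to the tail of the data structure $(h-1)$ times; because $u^{(i)}$ ends in a non-$[q-1]$ symbol (it was itself produced by $P$), no two adjacent copies merge and this phase uses exactly $(h-1)\|u^{(i)}\|$ entry appends. Then $P(z^{(i)})$ is produced by scanning $u^{(i)}$'s entries one more time, emitting each one until the cumulative length reaches $|z^{(i)}| = n - h|u^{(i)}|$: if the cut falls inside a $[q-1]$-run we drop that run and increment the preceding non-$[q-1]$ symbol, otherwise we truncate the current non-$[q-1]$ block and increment its last symbol. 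This contributes $\|P(z^{(i)})\|$ more entry appends and one in-place update.

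Combining the two contributions and using the elementary inequality $\|w\| \leq |w|$ noted just before the lemma, I would bound the total number of compressed-form operations by
\[
(h-1)\,\|u^{(i)}\| + \|P(z^{(i)})\| + 1 \;\leq\; (h-1)|u^{(i)}| + |P(z^{(i)})| + 1 \;=\; \Delta + 1,
\]
which is the claimed bound. The part I expect to be the main obstacle is making the $+1$ tight rather than a generic $O(1)$: one has to verify that locating the cut position inside a (possibly long) $[q-1]$-run, dropping or truncating that run, and performing the final increment can all be charged to a single constant-time step on the compressed doubly linked list (so that they collectively account for the one in-place update, not several). Once that bookkeeping is pinned down, the lemma follows.
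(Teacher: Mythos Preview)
Your proposal is correct and follows essentially the same route as the paper: both use the decomposition $u^{(i+1)}=(u^{(i)})^{h}P(z^{(i)})$, keep the stored compressed copy of $u^{(i)}$ as an untouched prefix, and bound the remaining work by the compressed length of the appended tail, with the paper phrasing this as ``it is enough to show that $P(z^{(i)})$ can be computed in at most $|P(z^{(i)})|+1$ time'' and then invoking $\|z^{(i)}\|\le |P(z^{(i)})|+1$. One small slip to correct: $u^{(i)}$ \emph{can} end in $[q-1]$, since $P$ replaces some $b\ne[q-1]$ by $b+1$ which may equal $q-1$, so adjacent copies may merge---but a merge only lowers the entry count, so your bound is unaffected (and the inequality you call ``$\|w\|\le|w|$'' is not literally what is stated before the lemma; the paper records $\|v\|-1\le\|P(v)\|\le\|v\|$, which is what actually yields the $+1$).
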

\begin{proof} Let $u^{(i)}=w^{(0)}[q-1]^{i_1}w^{(1)}[q-1]^{i_2}\cdots w^{(t-1)}[q-1]^{i_t}$ of length less than $n$. Then
$u^{(i+1)}=(u^{(i)})^h\cdot P(z^{(i)})$ where $z^{(i)}$ is a nonempty prefix of $u^{(i)}$. So it is enough to show that $P(z^{(i)})$ can be computed in at most $|P(z^{(i)})|+1$ time. Notice that the length of $z^{(i)}$ is $(n\mod |u^{(i)}|)$ (here the mod is equal to $|u^{(i)}|$ if $|u^{(i)}|$ divides $n$). Since $z^{(i)}$ is a prefix of $u^{(i)}$ we have that,
in the compressed representation, $z^{(i)}=w^{(0)}[q-1]^{i_1}w^{(1)}[q-1]^{i_2}\cdots w^{(t'-1)}[q-1]^{i_{t'}}$ for some $t'\le t$. Then
$P(z^{(i)})=w^{(0)}[q-1]^{i_1}w^{(1)}[q-1]^{i_2}\cdots P(w^{(t'-1)})$. Therefore the complexity of computing $P(z^{(i)})$ is $\|z^{(i)}\|\le |P(z^{(i)})|+1.$
\end{proof}

From the above lemma it follows that
\begin{theorem}\label{Th3} Let $v$ be a Lyndon word of length $n$. Using the compressed representation, the next Lyndon word of length $n$ can be computed in linear time.
\end{theorem}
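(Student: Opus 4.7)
The plan is a telescoping argument that combines Lemma \ref{kkk} with Lemma \ref{increase}. Given the Lyndon word $v$ of length $n$ in its compressed form $v^{(0)}[q-1]^{i_1}\cdots v^{(t-1)}[q-1]^{i_t}$, the algorithm produces a sequence $u^{(1)}, u^{(2)}, \ldots, u^{(k)}$ with $u^{(1)} = N(v)$ and each subsequent $u^{(i+1)}$ obtained from $u^{(i)}$ by Duval's rule, terminating when $|u^{(k)}| = n$. The total delay is the cost of forming $u^{(1)}$ plus the sum of the costs of the transitions $u^{(i)} \to u^{(i+1)}$.

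First I would handle the transition $v \to u^{(1)}$ separately, since it is not covered by Lemma \ref{kkk}. Here $u^{(1)}$ is formed by dropping the trailing block $[q-1]^{i_t}$ and replacing $v^{(t-1)}$ by $P(v^{(t-1)})$. Because $v^{(t-1)}$ by definition contains no occurrence of $[q-1]$, computing $P(v^{(t-1)})$ reduces to incrementing its last symbol (and, in the case where the new value equals $q-1$, splitting that symbol off as a fresh $[q-1]^1$ block at the end of the compressed list). In a doubly-linked-list realization of the compressed representation this is $O(1)$ work.

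For each subsequent transition with $|u^{(i)}| < n$, Lemma \ref{kkk} yields the per-step bound $|u^{(i+1)}| - |u^{(i)}| + 1$. Meanwhile Lemma \ref{increase} guarantees that the lengths form a strictly increasing sequence of integers $|u^{(1)}| < |u^{(2)}| < \cdots < |u^{(k)}| = n$, so that $k \le n$. Telescoping,
\[
\sum_{i=1}^{k-1} \bigl(|u^{(i+1)}| - |u^{(i)}| + 1\bigr) \;=\; |u^{(k)}| - |u^{(1)}| + (k-1) \;\le\; 2n,
\]
and combining with the $O(1)$ cost of the initial step produces a total delay of $O(n)$.

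The main obstacle I anticipate is cosmetic rather than conceptual: one must verify that the compressed-representation bookkeeping genuinely takes $O(1)$ per local edit, in particular when $P$ raises a symbol to $q-1$ and the doubly-linked list of blocks has to be restructured (a new $[q-1]^1$ block inserted, or an adjacent $[q-1]$ block merged). Once this is checked for both the initial $v \to u^{(1)}$ step and the steps that feed into Lemma \ref{kkk}, the telescoping bound above goes through without further work.
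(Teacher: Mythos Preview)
Your proposal is correct and follows essentially the same route as the paper: both proofs invoke Lemma~\ref{increase} to get a strictly increasing chain of lengths $|u^{(1)}|<\cdots<|u^{(k)}|=n$, apply Lemma~\ref{kkk} to each step, and telescope to obtain an $O(n)$ bound. The only cosmetic difference is that the paper charges linear time up front for compressing $v$ (and for decompressing $u^{(k)}$ at the end), whereas you assume $v$ is already given in compressed form and argue the first transition $v\to u^{(1)}$ is $O(1)$; both accountings yield the same $O(n)$ total.
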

\begin{proof} To compress $v$ and find $u^{(1)}=N(v)$ we need a linear time. By Lemma~\ref{increase} the Lyndon words after $v$ are $u^{(1)},\ldots, u^{(j)}$ where $|u^{(1)}|<|u^{(2)}|<\cdots<|u^{(j)}|=n$. By Lemma~\ref{kkk} the time complexity of computing the next Lyndon word $u^{(j)}$ of length $n$ is $\sum_{i=1}^{j-1} |u^{(i+1)}|-|u^{(i)}|+1\le |u^{(j)}|+n=O(n)$. Then decompressing the result takes linear time.
\end{proof}

We now give a case where Duval's algorithm fails to give the next Lyndon word of length $n$ in linear time. Consider the Lyndon word $01^{k}01^{k+1}$ of length $n=2k+3$. The next Lyndon word in Duval's algorithm is $01^{k+1}$. Then $01^{k+2}, 01^{k+3}, \ldots, 01^{2k+2}$. To get to the next Lyndon word of length $n$, $01^{2k+2}$, the algorithm does $\sum_{i=1}^{k+2}i=O(n^2)$ updates.

\section{Constant Amortized Time for Enumerating $L_{n,q}$}
In this section, we show that our algorithm in the previous section is CAT algorithm. That is, it has a constant amortized update cost. 

We first give some notation and preliminary results.
Let $\ell_n$ be the number of Lyndon words of length $n$, $L_i=\ell_1+\cdots +\ell_i$ for all $i=1,\ldots, n$ and $\Lambda_n=L_1+\cdots+L_n=n\ell_1+(n-1)\ell_{2}+\cdots+\ell_n$. It is known from~\cite{D88} that for $n\ge 11$ and any $q$
\begin{eqnarray}\label{one}
\frac{q^n}{n}\left(1-\frac{q}{(q-1)q^{n/2}}\right)\le \ell_n\le \frac{q^n}{n}
\end{eqnarray}
and for any $n$ and $q$
\begin{eqnarray}\label{two}
L_n\ge \frac{q}{q-1}\frac{q^n}{n}
\end{eqnarray}
and
\begin{eqnarray}\label{three}
\Lambda_n= \frac{q^2}{(q-1)^2}\frac{q^n}{n}\left(1+\frac{2}{(q-1)(n-1)}+O\left(\frac{1}{(qn)^2}\right)\right).
\end{eqnarray}
Denote by $\ell_{n,i}$ the number of Lyndon words of length $n$ of the form $w=ub[q-1]^i$ where $b\in \Sigma\backslash \{q-1\}$. Then $\ell_n=\ell_{n,0}+\ell_{n,1}+\cdots+\ell_{n,n-1}$.
Let $\ell^*_n$ be the number of Lyndon words of length $n$ that ends with the symbol $[q-2]$.
That is, of the form $u[q-2]$.

For the analysis we will use the following.
\begin{lemma} Let $w=ub[q-1]^t\in \Sigma^n$ where $b\in \Sigma\backslash \{q-1\}$ and $t\ge 1$. If $w=ub[q-1]^t$ is a Lyndon word of length $n$ then $u[b+1]$ is a Lyndon word.

In particular,
$$\ell_{n,t}\le \ell_{n-t}.$$

If $w=u[q-2]$ is a Lyndon word of length $n$ then $u[q-1]$ is a Lyndon word. In particular,
$$\ell^*_n\le \ell_{n,1}+\cdots+\ell_{n,n-1}.$$
\end{lemma}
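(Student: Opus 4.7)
The plan is to use the well-known equivalent characterization: $w$ is a Lyndon word if and only if $w$ is strictly lexicographically smaller than every proper non-empty suffix of $w$. (The ``if'' direction exploits that a Lyndon word is primitive and hence has no non-trivial border, which rules out the pathological case that a proper suffix is simultaneously a prefix.) Both implications in the lemma will then reduce to transporting a suffix-comparison from $w$ to the candidate shorter word.

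For the first implication, I would write $w = u_1\cdots u_k\, b\, [q-1]^t$ with $k=n-t-1$ and set $w' = u_1\cdots u_k\, [b+1]$. Any proper non-empty suffix $s'$ of $w'$ at starting position $i\in\{2,\ldots,k+1\}$ is naturally paired with the suffix $s = u_i\cdots u_k\, b\, [q-1]^t$ of $w$: one is obtained from the other by replacing the trailing $[b+1]$ with $b[q-1]^t$. Since $w<s$ and $w$ is borderless, they differ at a first position $j\le |s|$ with $w_j < s_j$. I would then split on where $j$ lies relative to the three regions of $s$ (the $u$-part, the single $b$, and the $[q-1]^t$-tail). In the $u$-part the comparison transports verbatim to $w'$ vs $s'$; when $j$ is the position of the $b$ in $s$ we have $w_j<b<b+1$, which is the corresponding character of $s'$; and when $j$ lies in the $[q-1]^t$-tail of $s$, agreement of $w$ and $s$ up to position $j-1$ forces $u_{k-i+2}=b$, so the decisive position for $w'$ vs $s'$ is already the earlier $k-i+2$, where $w'_{k-i+2}=b<b+1=s'_{k-i+2}$. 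The main obstacle is keeping this bookkeeping clean, especially aligning the indices of $u$ that appear in both $w$ and $s$ when $s$ begins inside $u$.

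Once the first implication is established, the map $w\mapsto u[b+1]$ is an injection from Lyndon words of length $n$ of the form $ub[q-1]^t$ into Lyndon words of length $n-t$ (inverted by appending $[q-1]^t$ and decrementing the last symbol), so $\ell_{n,t}\le \ell_{n-t}$. For the second implication I would run essentially the same argument with $w=u[q-2]$ and $w^*=u[q-1]$: each proper suffix $s^*$ of $w^*$ pairs with the suffix $s$ of $w$ obtained by swapping the trailing $q-1$ back to $q-2$, and the analogous case analysis on the first differing position of $w<s$ transports to $w^*<s^*$. Since $w^*$ ends in $[q-1]$, it lies in some $\ell_{n,t'}$ with $1\le t'\le n-1$, and injectivity of $w\mapsto w^*$ (recover $w$ by reverting the last symbol to $q-2$) finishes: $\ell^*_n\le \ell_{n,1}+\cdots+\ell_{n,n-1}$.
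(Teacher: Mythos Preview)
Your argument is correct, and it is genuinely different from the paper's. The paper proves both implications in one line each by invoking the correctness of Duval's successor map: since $|w|=n$ we have $D(w)=w$, and then $N(w)=P(D(w))=P(ub[q-1]^t)=u[b+1]$ (respectively $P(u[q-2])=u[q-1]$) is, by the known correctness of Duval's algorithm, the next Lyndon word in the enumeration---hence a Lyndon word. Your approach instead works from first principles via the suffix characterization, pairing each proper suffix $s'$ of the candidate $w'$ with the corresponding suffix $s$ of $w$ and transporting the inequality $w<s$ through a case split on where the first discrepancy falls. One minor remark: you do not actually need borderlessness to guarantee a first differing position $j\le |s|$; since $|s|<|w|$, the alternative ``$s$ is a prefix of $w$'' would force $s<w$, contradicting $w<s$. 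What your route buys is a self-contained proof that does not lean on the (cited but unproved here) correctness of Duval's enumeration; what the paper's route buys is brevity, since that correctness is already part of the ambient toolkit.
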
\label{klkl}
\begin{proof} If $w=ub[q-1]^t$ is Lyndon word of length $n$ then the next Lyndon word in Duvel's algorithm is $P(D(ub[q-1]^t))=P(ub[q-1]^t)=u[b+1]$.

If $w=u[q-2]$ is a Lyndon word of length $n$ then $P(D(w))=u[q-1]$ is the next Lyndon word in Duvel's algorithm.
\end{proof}

The amortized number of updates of listing all the Lyndon words of length at most $n$ in Duval's algorithm is~\cite{D88}
$$\gamma_n\le \frac{2\Lambda_n}{L_n}-1=1+\frac{2}{q-1}+O\left(\frac{1}{qn}\right)$$ We now show that
\begin{theorem} Using the compressed representation
the amortized number of updates for enumerating all the Lyndon words of length exactly $n$ is at most
$$\frac{3(\Lambda_n-L_n)+\ell_n}{\ell_n}= 1+\frac{3q}{(q-1)^2}+o(1)$$
\end{theorem}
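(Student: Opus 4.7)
The plan is to upper-bound the total number of symbol updates $T$ performed during one complete enumeration cycle and then divide by $\ell_n$. Since our algorithm follows Duval's visit order on $\bigcup_{m\le n} L_{m,q}$, every Lyndon word of length at most $n$ is produced as some intermediate $u^{(i)}$ exactly once per cycle, so it is enough to sum the cost of the single Duval transition $w \to N(w)$ over every $w\in\bigcup_{m\le n}L_{m,q}$.

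The per-transition cost splits into two cases. If $|w|=m<n$, Lemma~\ref{kkk} already bounds it by $|N(w)|-m+1\le n-m+1$, and since $m\le n-1$ the elementary inequality $1\le 2(n-m)$ gives the crude but convenient $n-m+1\le 3(n-m)$. If $|w|=n$, write $w=ub[q-1]^t$ with $b\ne q-1$; then $D(w)=w$, so $N(w)=P(w)=u[b+1]$, and in the compressed representation this only increments the single symbol $b$ and unlinks the trailing block $[q-1]^t$, for a cost of at most one update.

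Summing these contributions yields
\begin{equation*}
T \;\le\; \sum_{m=1}^{n-1} 3(n-m)\ell_m \;+\; \ell_n \;=\; 3(\Lambda_n - L_n)+\ell_n,
\end{equation*}
using $\Lambda_n-L_n=\sum_{m=1}^{n-1}(n-m)\ell_m$, an immediate consequence of $\Lambda_n=\sum_{m=1}^n(n-m+1)\ell_m$ and $L_n=\sum_{m=1}^n\ell_m$. Dividing by $\ell_n$ gives the first equality of the theorem. The asymptotic form follows by plugging in (\ref{one}), (\ref{two}), (\ref{three}): the upper half of (\ref{three}) together with the lower bound (\ref{two}) yields $\Lambda_n-L_n\le \frac{q^n}{n}\bigl(\frac{q}{(q-1)^2}+o(1)\bigr)$, the lower half of (\ref{one}) gives $\ell_n\ge \frac{q^n}{n}(1-o(1))$, and the desired $\frac{3q}{(q-1)^2}+o(1)$ drops out.

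The only subtle point is the constant-update cost for a length-$n$ transition, which hinges on the compressed representation allowing a run $[q-1]^t$ to be discarded in $O(1)$ time instead of $O(t)$; a naive count would instead produce a $\sum_{v\in L_n}(t_v+1)$ contribution, and one would have to invoke Lemma~\ref{klkl} ($\ell_{n,t}\le\ell_{n-t}$) to reabsorb $\sum_v t_v$ into $\Lambda_n-L_n$. Everything else is straightforward arithmetic with the three asymptotic formulas.
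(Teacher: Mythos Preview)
Your argument is correct in spirit and reaches the same bound, but the decomposition is genuinely different from the paper's, and one step is under-justified.

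\textbf{Where you differ.} The paper attributes the entire cost of reaching the next length-$n$ word to the starting word $v\in L_{n,q}$: it cases on the trailing run $[q-1]^t$ of $v$, charges $1$ or $2$ when $t=0$ and ``$3t$'' when $t\ge 1$, and then invokes Lemma~\ref{klkl} twice (once as $\ell^*_n\le \sum_t\ell_{n,t}$, once as $\ell_{n,t}\le \ell_{n-t}$) to convert those $\ell_{n,t}$ counts into the $\ell_{n-t}$'s that assemble into $\Lambda_n-L_n$. You instead charge each individual Duval step $w\to N(w)$ to $w$ itself, for every $w\in\bigcup_{m\le n}L_{m,q}$; for $m<n$ Lemma~\ref{kkk} already gives $|N(w)|-m+1\le 3(n-m)$, and summing over $m$ lands directly on $3(\Lambda_n-L_n)$ without ever touching Lemma~\ref{klkl}. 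This is cleaner: you are exploiting the fact that the intermediate words visited between two consecutive length-$n$ Lyndon words are, over a full cycle, exactly the set $\bigcup_{m<n}L_{m,q}$, so the per-chain telescoping in the paper is replaced by a global sum.

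\textbf{The soft spot.} Your claim that the step $v\to N(v)$ for $|v|=n$ costs ``at most one update'' is too optimistic as stated. When $b=q-2$ the incremented symbol becomes $q-1$ and must be merged into an adjacent run, which the paper counts as two updates; this is precisely why the paper needs the inequality $\ell^*_n\le \sum_t\ell_{n,t}$ to absorb the extra $\ell^*_n$ back into the bound and still land on a clean $\ell_n$. Your closing paragraph shows you see the issue and know Lemma~\ref{klkl} resolves it, but as written your main argument only yields $3(\Lambda_n-L_n)+c\,\ell_n$ for some small constant $c\ge 1$, which suffices for the asymptotic $1+\tfrac{3q}{(q-1)^2}+o(1)$ but not for the exact displayed numerator. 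Either justify $c=1$ carefully (arguing that the unlink and the merge are pointer moves, with only the increment of $b$ counted as an update), or keep $c$ generic and note that the asymptotic is unaffected.
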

\begin{proof}
The number of Lyndon words of length $n$ of the forms $w=ub$ where $b\in \Sigma$, $b\not=[q-1]$ and $b\not= [q-2]$ is
$\ell_n-(\ell_{n,1}+\cdots+\ell_{n,n-1})-\ell^*_n$. The next word of length $n$ is $u[b+1]$. So each such word takes one update to find the next word.
For words that end with the symbol $[q-2]$ we need to change this symbol to $[q-1]$ and plausibly merge it with the previous one in the compressed representation. This takes at most two updates. One for removing this symbol and one for merging it with the cells of the form $[q-1]^t$. Therefore for such words we need $2\ell_n^*$ updates. Thus, for Lyndon words that do not ends with $[q-1]$ we need
$\ell_n-(\ell_{n,1}+\cdots+\ell_{n,n-1})+\ell_n^*$ updates.

For strings of the form $w=ub[q-1]^t$ where $b\not=[q-1]$ and $t\ge 1$ we need at most $3t$ updates and therefore at most $3t\ell_{n,t}$ for all such words. See the proof of Theorem~\ref{Th3}. Therefore, the total updates is at most
$$\ell_n-(\ell_{n,1}+\cdots+\ell_{n,n-1})+\ell_n^*+3(\ell_{n,1}+2\ell_{n,2}+\cdots+(n-1)\ell_{n,n-1})$$
By Lemma~\ref{klkl}, this is at most
$$\ell_n+3(\ell_{n-1}+2\ell_{n-2}\cdots+(n-1)\ell_{1}).$$
Now, the amortized update is
\begin{eqnarray*}
\frac{\ell_n+3(\ell_{n-1}+2\ell_{n-2}\cdots+(n-1)\ell_{1})}{\ell_n}
&=&\frac{3(\Lambda_n-L_n)+\ell_n}{\ell_n}\\
&=&1+3\frac{\Lambda_n-L_n}{\ell_n}.\\
\end{eqnarray*}

By (\ref{one}), (\ref{two}) and (\ref{three}) we get
\begin{eqnarray*}
1+3\frac{\Lambda_n-L_n}{\ell_n}&\le & 1+3\frac{\frac{q^2}{(q-1)^2}\left(1+\frac{2}{(q-1)(n-1)}+O\left(\frac{1}{(qn)^2}\right)\right)-\frac{q}{q-1}}
{1-\frac{q}{(q-1)q^{n/2}}}\\
&=& 1+3\frac{\frac{q}{(q-1)^2}+\frac{q^2}{(q-1)^2}\left(\frac{2}{(q-1)(n-1)}+O\left(\frac{1}{(qn)^2}\right)\right)}
{1-\frac{q}{(q-1)q^{n/2}}}\\
&=& 1+\frac{3q}{(q-1)^2}+O\left(\frac{1}{qn}\right).
\end{eqnarray*}
\end{proof}

\section{Membership in $L_{n,q}$}
In this subsection, we study the complexity of deciding membership in $L_{n,q}$.
That is, given a word $\sigma\in \FF_q^n$. Decide whether $\sigma$ is in $L_{n,q}$.

Since $\sigma\in L_{n,q}$ if and only if for all $1<i\le n$, $R_i(\sigma)>\sigma$,
and each comparison of two words of length $n$ takes $O(n)$ operations, membership can be decided in time $O(n^2)$. Duval in \cite{D83} gave a linear time algorithm.
In this subsection, we give a simple algorithm that
decides membership in linear time. To this
end, we need to introduce the suffix tree data structure.

The suffix tree of a word $s$ is a trie that contains all the suffixes of $s$.
See for example the suffix tree of the word $s=1010110\$$ in Figure~\ref{SuffixTree}.
A suffix tree of a word $s$ of length $n$ can be constructed in linear time in $n$~\cite{W73,F97}.
Using the suffix tree, one can check if a word $s'$ of length $|s'|=m$ is a suffix of $s$ in time $O(m)$.

\begin{figure}
\centering
\includegraphics[trim = 0 0cm 0 0cm,width=1.1\textwidth]{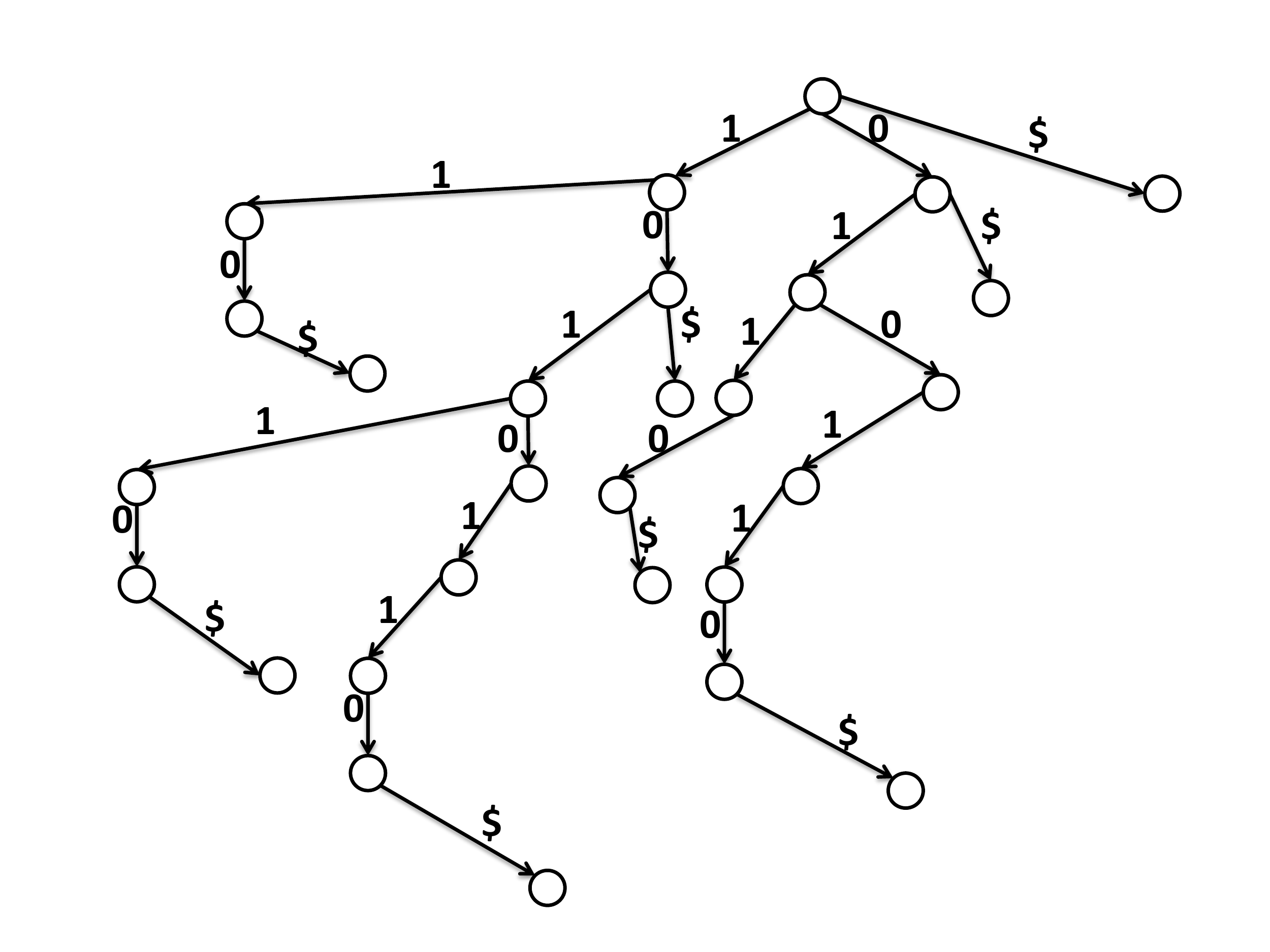}
\caption{The suffix tree of $s=1010110\$$. If $1<0<\$$ then $\Min(ST(s))=110\$$. If $0<1<\$$ then $\Min(ST(s))=010110\$$.
}
\label{SuffixTree}
\end{figure}

Denote by $ST(s)$ the suffix tree of $s$. Define any order $<$ on the symbols of $s$.
Define Min$(ST(s))$ as follows:
Start from the root of the trie
and follow, at each node, the edges with the minimal symbol. Then Min$(ST(s))$ is the word that corresponds to this path.
One can find this word in $ST(s)$ in time that is linear in its length.

The function $\Min$ defines the following total order $\prec$ on the suffixes: Let $T=ST(s)$. Take $\Min(T)$ as the minimum element in that order. Now remove this word from $T$ and take $\Min(T)$ as the next one in that order.
Repeat the above until the tree is empty. For example, if $0<1<\$$ then the order in the suffix tree in Figure~\ref{SuffixTree} is
$$010110\$,0110\$,0\$,1010110\$,10110\$,10\$,110\$,\$. $$
Obviously, for two suffixes $s$ and $r$, $s\prec r$ if and only if for $j=\min(|r|,|s|)$ we have $s_1\cdots s_j<r_1\cdots r_j$ (in the lexicographic order).

We define $ST_m(s)$ the suffix tree of the suffixes of $s$ of length at least $m$.
We can construct $ST_m(s)$ in linear time in $|s|$ by taking a walk in the suffix tree
$ST(s)$ and remove all the words of length less than $m$. In the same way as above, we define
$\Min(ST_m(s))$.

We now show
\begin{lemma}\label{cond01} Let $\$\not\in\FF_q$
be a symbol. Define any total order $<$ on $\Sigma=\FF_q\cup\{\$\}$ such that $\$<\alpha$
for all $\alpha\in \FF_q$.
Let $\sigma\in \FF_q^n$. Then $\sigma\in L_{n,q}$ if and only if
$$\Min(ST_{n+2}(\sigma\sigma\$))=\sigma\sigma\$.$$
\end{lemma}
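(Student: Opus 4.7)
The plan is to unfold $\Min(ST_{n+2}(\sigma\sigma\$))$ as a $\prec$-comparison among a short explicit list of suffixes, and then match the outcome against the defining conditions $\sigma<R_i(\sigma)$ of a Lyndon word.

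First I would enumerate the suffixes kept by $ST_{n+2}(\sigma\sigma\$)$. The suffix $s^{(i)}$ of $\sigma\sigma\$$ starting at position $i$ has length $2n+2-i$, so it is retained exactly for $i\in\{1,\ldots,n\}$. We have $s^{(1)}=\sigma\sigma\$$ and, for $i\in\{2,\ldots,n\}$,
$$s^{(i)} \;=\; R_i(\sigma)\,\sigma_i\sigma_{i+1}\cdots\sigma_n\,\$,$$
so the first $n$ symbols of $s^{(i)}$ form the rotation $R_i(\sigma)$. Since $\$\notin\FF_q$ and each kept suffix carries its $\$$ at a distinct position, no suffix is a prefix of another; hence $\prec$ is a strict total order on this list and $\Min$ returns its $\prec$-minimum.

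The forward direction should be immediate. If $\sigma$ is Lyndon, the strict inequality $\sigma<R_i(\sigma)$ is witnessed at some position $k\le n$. Because $s^{(1)}$ and $s^{(i)}$ agree with $\sigma$ and $R_i(\sigma)$ on their first $n$ positions, the same $k$ witnesses $s^{(1)}\prec s^{(i)}$, so $\Min=s^{(1)}=\sigma\sigma\$$.

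For the converse, assume $s^{(1)}$ is the $\prec$-minimum. For each $i\in\{2,\ldots,n\}$ I must exclude both $R_i(\sigma)<\sigma$ (which would immediately give $s^{(i)}\prec s^{(1)}$) and $R_i(\sigma)=\sigma$. The equality case is where the symbol $\$$ does its work and is the main technical step: under $\sigma=R_i(\sigma)$ the word $\sigma$ is cyclically $(i-1)$-periodic, so $\sigma_k=\sigma_{k+i-1}$ whenever $k+i-1\le n$, and a short index calculation (applying the periodicity on both the $[1,n]$ and $[n+1,2n]$ regions of $s^{(1)}$) shows that $s^{(1)}$ and $s^{(i)}$ coincide on their first $2n+1-i$ positions. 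At position $2n+2-i$, however, $s^{(i)}$ is already terminated by $\$$ while $s^{(1)}$ still holds $\sigma_{n-i+2}\in\FF_q$; since $\$<\sigma_{n-i+2}$ this yields $s^{(i)}\prec s^{(1)}$, contradicting the minimality of $s^{(1)}$. With both cases excluded, $\sigma<R_i(\sigma)$ holds for every $i$, so $\sigma\in L_{n,q}$. The threshold $n+2$ (rather than $n+1$) is chosen precisely to exclude the suffix $\sigma\$$ of length $n+1$, which would always $\prec$-beat $\sigma\sigma\$$ through its own trailing $\$$ and defeat the characterization.
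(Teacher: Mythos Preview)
Your proof is correct and follows essentially the same route as the paper's: both identify the retained suffixes as $\sigma_i\cdots\sigma_n\sigma\$$ for $i=1,\ldots,n$, observe that their first $n$ symbols spell the rotation $R_i(\sigma)$, and then handle the three possibilities $R_i(\sigma)<\sigma$, $R_i(\sigma)=\sigma$, $R_i(\sigma)>\sigma$, with the terminal $\$$ breaking the tie in the equality case. Your extra remark explaining why the cutoff must be $n+2$ rather than $n+1$ is a nice clarification not made explicit in the paper.
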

\begin{proof} First, notice that every word in $ST_{n+2}(\sigma\sigma\$)$ is of the form
$\sigma_i\cdots\sigma_n\sigma\$$ for some $i=1,\ldots,n$. Let $T=ST_{n+2}(\sigma\sigma\$)$.

If $R_i(\sigma)<\sigma$ then
$\sigma_{i}\cdots\sigma_n\sigma_1\cdots\sigma_{i-1} < \sigma$,
and therefore $\sigma_{i}\cdots\sigma_n\sigma\$=\sigma_{i}\cdots\sigma_n\sigma_1\cdots\sigma_{i-1}\sigma_{i}\cdots \sigma_n\$
\prec \sigma\sigma\$$. Thus,
$\Min(T)\not=\sigma\sigma\$$.

If $R_i(\sigma)=\sigma$ then $\sigma_{i}\cdots\sigma_n\sigma_1\cdots\sigma_{i-1} = \sigma$,
and then $$\sigma_{i}\cdots\sigma_n\sigma=\sigma_{i}\cdots\sigma_n\sigma_1\cdots\sigma_{i-1} \sigma_{i}\cdots\sigma_n= \sigma\sigma_1\cdots\sigma_{n-i+1}.$$
Thus, $\sigma_{i}\cdots\sigma_n\sigma\$< \sigma\sigma_1\cdots\sigma_{n-i+2}$ which implies
$\sigma_{i}\cdots\sigma_n\sigma\$\prec \sigma\sigma\$$.
Therefore, we have
$\Min(T)\not=\sigma\sigma\$$.

If $R_i(\sigma)>\sigma$ then $\sigma_{i}\cdots\sigma_n\sigma_1\cdots\sigma_{i-1} > \sigma$,
and therefore $\sigma_{i}\cdots\sigma_n\sigma\$ $ $\succ \sigma\sigma\$$
and then $\Min(T)\not=\sigma_{i}\cdots\sigma_n\sigma\$$.

Now, if $\sigma\in L_{n,q}$ then $R_i(\sigma)>\sigma$ for all $1<i\le n$. Thus $\Min(T)\not=\sigma_{i}\cdots\sigma_n\sigma\$$ for all
$i$. Therefore we have $\Min(T)=\sigma\sigma\$$.
If $\sigma\not\in L_{n,q}$ then there is $i$ such that $R_i(\sigma)\le \sigma$, and then $\Min(T)\not=\sigma\sigma\$$.
\end{proof}

\begin{figure}[h!]
  \begin{center}
  \fbox{\fbox{\begin{minipage}{28em}
  \begin{tabbing}
  xxxx\=xxxx\=xxxx\=xxxx\= \kill
  {\bf Membership$(\sigma,n,q)$}\\
  \> 1) Define a total order on $\FF_q\cup \{\$\}$ such that $\$$ is the\\
  \> \> minimal element.\\
  \> 2) $T\gets$Construct the Suffix Tree of $\sigma\sigma\$$.\\
  \> 3) Take a walk in $T$ and remove all the words of length less\\
  \>\> than $n+2$.\\
  \> 4) Define $r$ the word of the path that start from the root\\
  \> \> and takes, at each node, the edge with the smallest symbol.\\
  \> 5) If $r=\sigma\sigma\$$ then $\sigma\in L_{n,q}$ else $\sigma\not\in L_{n,q}$.\\
  \end{tabbing}
  \end{minipage}}}
  \end{center}
	\caption{Membership of $\sigma$ in $L_{n,q}$.}
	\label{Alg02}
	\end{figure}

We now prove
\begin{theorem}\label{linear} There is a linear time algorithm that decides whether a word $\sigma$ is in $L_{n,q}$.
\end{theorem}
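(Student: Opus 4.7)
The plan is to verify that the algorithm \textbf{Membership} in Figure~\ref{Alg02} is both correct and runs in linear time; correctness is essentially Lemma~\ref{cond01}, so the bulk of the proof will be the time analysis.

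First I would argue correctness. The word $\sigma\sigma\$$ has length $2n+1$. Step~2 constructs its suffix tree $T$, and Step~3 prunes $T$ to retain only those suffixes of length at least $n+2$, producing exactly $ST_{n+2}(\sigma\sigma\$)$. Step~4 computes $r = \Min(ST_{n+2}(\sigma\sigma\$))$ by following, from the root, the edge labelled with the smallest symbol at each branching node. By Lemma~\ref{cond01}, $\sigma\in L_{n,q}$ if and only if $r = \sigma\sigma\$$, which is precisely the test performed in Step~5.

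Next I would bound the running time of each step by $O(n)$. Step~1 is a one-time ordering of the fixed alphabet $\FF_q\cup\{\$\}$ and costs $O(q) = O(1)$ with respect to $n$ (or can be assumed given). Step~2 builds the suffix tree of a string of length $2n+1$, which can be done in $O(n)$ time using any of the classical linear-time suffix tree constructions (Weiner, McCreight, Ukkonen, Farach) referenced in the excerpt~\cite{W73,F97}. Step~3 is a single depth-first walk over a suffix tree with $O(n)$ nodes and edges, marking or deleting any leaf whose root-to-leaf path spells a suffix shorter than $n+2$; this is linear in the size of the tree, hence $O(n)$. Step~4 traces a single root-to-leaf path, always picking the minimum-label outgoing edge at each internal node, and outputs its concatenated label of total length $2n+1$, so it costs $O(n)$. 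Step~5 compares two strings of length $2n+1$ symbol by symbol, which is again $O(n)$.

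Summing these contributions gives total time $O(n)$, proving Theorem~\ref{linear}. The only step that requires nontrivial justification is Step~2, and for that I would simply invoke the existing linear-time suffix-tree construction; no new machinery is needed. I do not foresee any real obstacle, since all bookkeeping (finding minimum-label edges, tagging short leaves) is a constant-time operation per node in the suffix tree under the fixed alphabet $\FF_q\cup\{\$\}$.
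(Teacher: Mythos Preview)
Your proposal is correct and follows essentially the same approach as the paper: invoke Lemma~\ref{cond01} for correctness, then observe that building $ST_{n+2}(\sigma\sigma\$)$ and computing $\Min$ each take linear time. Your write-up simply spells out the per-step time bounds in more detail than the paper's two-line argument.
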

\begin{proof} The algorithm is in Figure~\ref{Alg02}.
We use Lemma~\ref{cond01}. The algorithm constructs the trie $ST_{n+2}(\sigma\sigma\$)$.
The construction takes linear time in $\sigma\sigma\$$ and therefore linear time in $n$.
Finding $\Min(ST_{n+2}(\sigma\sigma\$))$ in a trie takes linear time.
\end{proof}


\begin{thebibliography}{}



\bibitem{AKN11}
F. Ashari-Ghomi, N. Khorasani, A. Nowzari-Dalini.
Ranking and Unranking Algorithms for k-ary Trees
in Gray Code Order.
International Scholarly and Scientific Research \& Innovation.
6(8). pp. 833--838. (2012)

\bibitem{AM09}
M. Ackerman, E. M\"{a}kinen:
Three New Algorithms for Regular Language Enumeration. COCOON 2009.
pp. 178--191. (2009).

\bibitem{AS09}
M. Ackerman, J. Shallit.
Efficient enumeration of words in regular languages.
{\it Theor. Comput. Sci.} 410(37) pp. 3461--3470. (2009)

\bibitem{B15}
N. H. Bshouty.
Dense Testers: Almost Linear Time and Locally Explicit Constructions.
Electronic Colloquium on Computational Complexity (ECCC). 22: 6 (2015).

\bibitem{BP94}
J. Berstel, M. Pocchiola.
Average cost of Duval's algorithm for generating Lynon words.
{\it Theoretical Computer Science}. 132(1). pp. 415-425. (1994).

\bibitem{C71}
G. E. Collins.
The Calculation of Multivariate Polynomial Resultants.
{\it J. ACM}. 18(4): pp. 515--532. (1971).

\bibitem{CR00}
K. Cattell, F. Ruskey, J. Sawada, M. Serra, C. R. Miers.
Fast Algorithms to Generate Necklaces, Unlabeled Necklaces, and Irreducible Polynomials over GF(2).
{\it J. Algorithms.} 37(2). pp. 267-282. (2000).


\bibitem{CW00}
N. J. Calkin, H. S. Wilf.
Recounting the Rationals.
{\it The American Mathematical Monthly}. 107(4). pp. 360--363. (2000).

\bibitem{D00}
P. D\"{o}m\"{o}si.
Unusual Algorithms for Lexicographical Enumeration.
{\it Acta Cybern.} 14(3). pp. 461--468. (2000)

\bibitem{D83}
J. P. Duval. Factorizing words over an ordered alphabet.
{\it Jornal of Algorithms}. 4(4). pp. 363-381. (1983).

\bibitem{D88}
J.-P. Duval.
G\'en\'eration d'une Section des Classes de Conjugaison et Arbre des Mots de Lyndon de Longueur Born\'ee. {\it Theor. Comput. Sci.} 60, pp. 255--283. (1988).

\bibitem{DPS}
C. Ding, D. Pei, A. Salomaa.
Chinese Remainder Theorem. Application in Computing, Coding, Cryptography.
World Scientific Publication. (1996).

\bibitem{E85}
M. C. Er.
Enumerating Ordered Trees Lexicographically.
{\it Comput. J.} 28(5). pp. 538--542. (1985).

\bibitem{EMR09}
K. M. Elbassioni, K. Makino, I. Rauf.
Output-Sensitive Algorithms for Enumerating Minimal Transversals for Some Geometric Hypergraphs.
ESA 2009. pp. 143--154. (2009).

\bibitem{F97}
M. Farach. Optimal Suffix Tree Construction with Large Alphabets.
38th IEEE Symposium on Foundations of Computer Science (FOCS '97), pp. 137--143. (1997)

\bibitem{FK86}
H. Fredricksen, I. J. Kessler.
An algorithm for generating necklaces of beads in two colors. Discrete Mathematics 61(2-3): 181-188 (1986)

\bibitem{FM78}
H. Fredricksen, J. Maiorana.
Necklaces of beads in $k$ color and $k$-aray de Bruijn sequences.
{\it Discrete Mathematics}, 23(3). pp. 207--210. (1978).

\bibitem{FM}
K. Fukuda and Y. Matsui. Enumeration of Enumeration Algorithms and Its Complexity.
http://www-ikn.ist.hokudai.ac.jp/$\sim$wasa/enumeration\underline{ }complexity.html.


\bibitem{H90}
D. T. Huynh.
The Complexity of Ranking Simple Languages.
{\it Mathematical Systems Theory.} 23(1). pp. 1--19. (1990)


\bibitem{GS85}
A. V. Goldberg, M. Sipser.
Compression and Ranking. STOC 1985. pp. 440--448. (1985).

\bibitem{GLW82}
U. Gupta, D. T. Lee, C. K. Wong.
Ranking and Unranking of 2-3 Trees.
{\it SIAM J. Comput.} 11(3). pp. 582--590. (1982).

\bibitem{GS}
J. von zur Gathen, V. Shoup.
Computing Frobenius Maps and Factoring Polynomials.
{\it Computational Complexity}, 2. pp. 187-224. (1992).


\bibitem{H87}
L. A. Hemachandra.
On ranking. Structure in Complexity Theory Conference. (1987).

\bibitem{HR90}
L. A. Hemachandra, S. Rudich.
On the Complexity of Ranking.
{\it J. Comput. Syst. Sci.} 41(2). pp. 251--271. (1990)

\bibitem{KR00}
S. Kapoor, H. Ramesh.
An Algorithm for Enumerating All Spanning Trees of a Directed Graph.
{\it Algorithmica}. 27(2). pp. 120--130. (2000).

\bibitem{KRR15}
T. Kociumaka, J. Radoszewski, W. Rytter.
Efficient Ranking of Lyndon Words and Decoding Lexicographically Minimal de Bruijn Sequence.
CoRR abs/1510.02637 (2015)


\bibitem{Nor01}
H. W. Lenstra. Finding isomorphisms between finite fields. Mathematics of Computation, 56, 193,
pp. 329--347. (1991).


\bibitem{LN84}
R. Lidl and H. Niederreiter. Finite Fields. Encyclopedia of Mathematics and its Applications.
Addison-Wesley Publishing Company. (1984).

\bibitem{Nor02}
A. Poli. A deterministic construction of normal bases with complexity $O(n^3+n log n log log n log q)$.
{\it J. Symb. Comp.}, 19, pp. 305--319. (1995).


\bibitem{L86}
L. Li.
Ranking and Unranking of AVL-Trees.
{\it SIAM J. Comput.} 15(4). pp. 1025-1035. (1986).

\bibitem{LSW}
A. Lempel, G. Seroussi, S. Winograd:
On the Complexity of Multiplication in Finite Fields.
{\it Theor. Comput. Sci.} 22: pp. 285--296. (1983).


\bibitem{M97}
E. M\"{a}kinen.
Ranking and Unranking Left Szilard Languages.
University of Tampere. Report A-1997-2.

\bibitem{MU04}
K. Makino, T. Uno.
New Algorithms for Enumerating All Maximal Cliques.
SWAT 2004. pp. 260--272. (2004).


\bibitem{M97b}
E. M\"akinen.
On Lexicographic Enumeration of Regular and Context-Free Languages.
{\it Acta Cybern.} 13(1). pp. 55--61. (1997)

\bibitem{MR01}
W. J. Myrvold, F. Ruskey.
Ranking and unranking permutations in linear time.
{\it Inf. Process. Lett.} 79(6). pp. 281--284. (2001).


\bibitem{P86}
J. M. Pallo.
Enumerating, Ranking and Unranking Binary Trees.
{\it Comput. J.} 29(2). pp. 171--175. (1986).

\bibitem{RS92}
F. Ruskey, C. D. Savage, T. M. Y. Wang.
Generating Necklaces. {\it J. Algorithms}. 13(3). pp. 414--430. (1992).


\bibitem{RW11}
J. B. Remmel, S. G. Williamson.
Ranking and Unranking Trees with a Given Number or a Given Set of Leaves.
arXiv:1009.2060.

\bibitem{S95}
Y. Shen.
A new simple algorithm for enumerating all minimal paths and cuts of a graph.
{\it Microelectronics Reliability}. 35(6). pp. 973--976. (1995).

\bibitem{S99}
I. Shparlinski. Finite fields: theory and computation. Mathematics and Its Applications, Vol. 477.
(1999).

\bibitem{S14}
T. J. Savitsky.
Enumeration of 2-Polymatroids on up to Seven Elements.
{\it SIAM J. Discrete Math.} 28(4). pp. 1641--1650. (2014)

\bibitem{T08}
P. Tarau.
Ranking and Unranking of Hereditarily Finite Functions and Permutations.
arXiv:0808.0554. (2008).

\bibitem{U96}
T. Uno.
An Algorithm for Enumerating all Directed Spanning Trees in a Directed Graph.
ISAAC 1996. pp. 166--173. (1996).

\bibitem{U97}
T. Uno.
Algorithms for Enumerating All Perfect, Maximum and Maximal Matchings in Bipartite Graphs.
ISAAC 1997. pp. 92--101. (1997).

\bibitem{U01}
T. Uno.
A Fast Algorithm for Enumerating Bipartite Perfect Matchings. ISAAC 2001.
pp.  367--379. (2001)

\bibitem{WC11}
R-Y. Wu, J-M. Chang.
Ranking and unranking of well-formed parenthesis strings in diverse representations.
ICCRD - International Conference on Computer Research and Development. (2011).

\bibitem{VY92}
V. V. Vazirani, M. Yannakakis.
Suboptimal Cuts: Their Enumeration, Weight and Number (Extended Abstract).
ICALP 1992. pp. 366-377. (1992).

\bibitem{W73}
P. Weiner., Linear pattern matching algorithms.
14th Annual IEEE Symposium on Switching and Automata Theory.
pp. 1--11. (1973).


\bibitem{WCCL13}
R-Y. Wu, J-M. Chang, A-H. Chen, C-L. Liu.
Ranking and Unranking $t$-ary Trees in a Gray-Code Order.
{\it Comput. J.} 56(11). pp. 1388--1395. (2013).

\bibitem{WCCK13}
R-Y. Wu, J-M. Chang, A-H. Chen, M-T. Ko.
Ranking and Unranking of Non-regular Trees in Gray-Code Order.
{\it IEICE Transactions}. 96-A(6), pp. 1059--1065. (2013).

\bibitem{WCW11}
R-Y. Wu, J-M. Chang, Y-L. Wang.
Ranking and Unranking of $t$-Ary Trees Using RD-Sequences.
{\it IEICE Transactions}. 94-D(2). pp. 226--232. (2011).

\bibitem{WCC11}
R-Y. Wu, J-M. Chang, C-H. Chang.
Ranking and unranking of non-regular trees with a prescribed branching sequence.
Mathematical and Computer Modelling. 53(5-6). pp. 1331--1335. (2011).

\bibitem{YWS10}
Li-Pu Yeh, B-F Wang, H-H Su.
Efficient Algorithms for the Problems of Enumerating Cuts by Non-decreasing Weights.
{\it Algorithmica}. 56(3). pp. 297-312. (2010).
\end{thebibliography}
\end{document}